\newcommand{\ignore}[1]{}
\renewcommand{\Pr}{{\bf Pr}}
\newcommand{\E}{{\bf E}}
\newcommand{\OO}{{\cal O}}
\begin{document}
\title{Optimal Randomized Group Testing Algorithm\\  to Determine the
  Number of Defectives}
%\author{Nader H. Bshouty}
%\institute{Technion, Haifa, Israel\\
%bshouty@ca.technion.ac.il}
\author{
{\bf Nader H. Bshouty}  \\
{\bf  Catherine A. Haddad-Zaknoon} \\ Dept. of Computer Science\\ Technion,  Haifa\\ \ \\
{\bf Raghd Boulos} \\
{\bf Foad Moalem}  \\
{\bf Jalal Nada} \\
{\bf Elias Noufi} \\
{\bf Yara Zaknoon} \\ The Arab Orthodox College, Haifa\\
}
\institute{}
\maketitle

\begin{abstract}
We study the problem of determining exactly the number of defective items in an adaptive Group testing by using a minimum number of tests. We improve the existing algorithm and prove a lower bound that shows that the number of tests in our algorithm is optimal up to small additive terms.
\end{abstract}
\section{Introduction}

Let $X$ be a set of {\it items} with some {\it defective items} $I\subseteq X$.
In Group testing, we {\it test} a subset $Q\subseteq X$ of items, and the answer
to the test is $1$ if $Q$ contains at least one defective item, i.e., $Q\cap I\not=\O$,
and $0$ otherwise. Group testing was initially introduced as a potential approach to the economical mass blood testing, \cite{D43}. However it has been proven to be applicable in a variety of problems, including DNA library screening, \cite{ND00}, quality control in product testing, \cite{SG59}, searching files in storage systems, \cite{KS64}, sequential screening of experimental variables, \cite{L62}, efficient contention resolution algorithms for multiple-access communication, \cite{KS64,W85}, data compression, \cite{HL02}, and computation in the data stream model, \cite{CM05}. See a brief history and other applications in~\cite{Ci13,DH00,DH06,H72,MP04,ND00} and references therein.

Estimating or determining exactly the number of defective items is an important problem in biological and medical applications~\cite{ChenS90,Swallow85}. It is used to estimate the proportion of organisms capable of transmitting the aster-yellows virus in a natural population of leafhoppers~\cite{Thompson62}, estimating the infection rate of the yellow-fever virus in a mosquito population~\cite{WalterHB80} and estimating the prevalence of a rare disease using grouped samples to preserve individual anonymity~\cite{GatwirthH89}.

In the {\it adaptive algorithm}, the tests can depend on the answers to the previous ones. In the {\it non-adaptive algorithm}, they are independent of the previous one and; therefore, one can do all the tests in one parallel step.

In this paper, we study the problem of determining exactly the number of defective items with adaptive Group testing algorithms. We first give an algorithm that improves the number of tests in the best-known algorithm by a factor of~$4$. Improving constant factors in Group testing algorithms is one of the utmost important challenges in group testing since, in many applications, tests are incredibly costly and time-consuming~\cite{BDKS16,CD08,CDL09,DH06,DHWZ06,PR11}. We also give a lower bound that shows that our algorithm is optimal up to a small additive term. To the best of our knowledge, this is the first non-trivial lower bound for this problem.

\subsection{Previous and New Results}
Let $X$ be a set of $n$ items with $d$ defective items $I$. All the algorithms in this paper are {\it adaptive}. That is, the tests can depend on the answers to the previous ones. All the non-adaptive algorithms must ask at least $\Omega((d^2/\log d)\log n)$ queries for determining exactly the number of defective items and $\Omega(\log n/\log\log n)$ for estimating their number, \cite{Bshouty18,DamaschkeM10,DamaschkeM10b}.  In~\cite{BshoutyBHHKS18}, Bshouty et al. show that any deterministic or Las Vegas adaptive algorithm must ask at least $\Omega(d\log (n/d))$.
Since the query complexity depends on the number of items $n$, which, for most applications, is extremely large, non-adaptive algorithms and Las Vegas (and deterministic) algorithms are not desirable for solving this problem.

In \cite{Cheng11}, Cheng gave a randomized Monte Carlo adaptive algorithm that for any constant~$c$, asks $4dc\log d$ queries\footnote{all the complexities in this introduction are multiplied by $1+o(1)$ where the $o(1)$ is with respect to $d$.} and, with probability at least $1-\delta=1-1/d^{c-1}$, determines exactly the number of defective items. His algorithm, with the technique used in this paper\footnote{First estimate $d$ using the algorithm in this paper. Then determine $c$ to get success $\delta$ and run his algorithm}, gives a randomized Monte Carlo algorithm that asks $4d\log(d/\delta)$ queries with success probability at least $1-\delta$ for any $\delta$.

In this paper, we first give lower bounds for the number of queries. The first lower bound is $d\log(1/d\delta)$ for any $n$, $d$ and $\delta>1/n$. This shows that Cheng's algorithm is almost optimal (up to a multiplicative factor of $4$ and an additive term of $4d\log d$). For $\delta<1/n$, we give the tight bound $d\log (n/d)$, which, in particular, is the number of tests for any deterministic algorithm. This bound matches the tight bound for {\it finding} all the defective items.
We also give the better lower bound $d\log(1/\delta)$ for any large enough\footnote{$n\ge d^{\omega(1)}$ where $\omega(1)$ is with respect to $d$ for example $\log^*d$} $n$.

We then give a new randomized Monte Carlo algorithm that asks $d\log(d/\delta)$ queries. Our algorithm improves Cheng algorithm by a multiplicative factor of~$4$ and is optimal up to an additive term of $d\log d$. Notice that for $\delta=1/d^{\omega(1)}$ (especially when $\delta$ depends on $n$) our algorithm is optimal up to a small additive term.

Estimating the number of defective items is studied in~\cite{BshoutyBHHKS18,ChengX14,DamaschkeM10,DamaschkeM10b,FalahatgarJOPS16,RonT14}. The problem is to find an integer $D$ such that $d\le D\le (1+\epsilon)d$. In~\cite{BshoutyBHHKS18} Bshouty et al. modified Falhatgar et al. algorithm,~\cite{FalahatgarJOPS16}, and gave a randomized algorithm that makes expected number of $(1-\delta)\log\log d+O((1/\epsilon^2)\log(1/\delta))$ tests. They also prove the lower bound $(1-\delta)\log \log d+\Omega((1/\epsilon)\log(1/\delta))$.

\section{Definitions and Preliminary Results}
In this section we give some notations, definitions, the type of algorithms that are used in the literature and some preliminary results
\subsection{Notations and Definitions}
Let $X=[n]:=\{1,2,3,\ldots,n\}$ be a set of {\it items} with some {\it defective} items $I\subseteq [n]$. In Group testing, we {\it query} a subset $Q\subseteq X$ of items and the answer to the query is $Q(I):=1$ if $Q$ contains at least one defective item, i.e., $Q\cap I\not=\O$, and $Q(I):=0$, otherwise.

Let $I\subseteq [n]$ be the set of defective items. Let $\OO_I$ be an oracle that for a query $Q\subseteq [n]$ returns $Q(I)$. Let $A$ be an algorithm that has access to an {\it oracle} $\OO_I$. The output of the algorithm for an oracle $\OO_I$ is denoted by $A(\OO_I)$. When the algorithm is randomized, then we add the random seed $r$, and then the output of the algorithm is a random variable $A(\OO_I,r)$ in $[n]$. Let $A$ be a randomized algorithm and $r_0$ be a seed. We denote by $A(r_0)$ the deterministic algorithm that is equivalent to the algorithm $A$ with the seed $r_0$. We denote by ${\cal Q}(A,\OO_I)$ (or ${\cal Q}(A,\OO_I,r)$) the set of queries that $A$ asks with oracle $\OO_I$ (and a seed $r$). We say that the algorithm determines $|I|=d$ exactly if $A(\OO_I,r)=|I|$.

\subsection{Types of Algorithms}
In this paper we consider four types of algorithms that their running time is polynomial in $n$.
\begin{enumerate}
\item The {\it deterministic} algorithm $A$ with an oracle $\OO_I$, $I\subseteq X$. The query complexity of a deterministic algorithm $A$ is the {\it worst case complexity}, i.e, $\max_{|I|=d}|{\cal Q}(A,\OO_I)|$.
\item The {\it randomized Las Vegas algorithm}. We say that a randomized algorithm $A$ is a {\it randomized Las Vegas algorithm that has expected query complexity} $g(d)$ if for any $I\subseteq X$, $A$ with an oracle $\OO_I$ asks $g(|I|)$ expected number of queries and with probability $1$ outputs $|I|$.
\item The {\it randomized Monte Carlo algorithm}.
We say that a randomized algorithm $A$ is a {\it randomized Monte Carlo algorithm that has query complexity} $g(d,\delta)$ if for any $I\subseteq X$, $A$ with an oracle $\OO_I$ asks at most $g(|I|,\delta)$ queries and with probability at least $1-\delta$ outputs $|I|$.
\item The {\it randomized Monte Carlo algorithm with average case complexity}. We say that a randomized algorithm $A(r)$ is {\it Monte Carlo algorithm with average case complexity that has expected query complexity} $g(d,\delta)$ if for any $I\subseteq X$, $A$ asks $g(|I|,\delta)$ expected number of queries and with probability at least $1-\delta$ outputs $|I|$.
\end{enumerate}

\subsection{Preliminary Results}
We now prove a few results that will be used throughout the paper

Let $s\in \cup_{i=0}^\infty\{0,1\}^i$ be a {\it string} over $\{0,1\}$ (including the empty string $\lambda\in \{0,1\}^0$). We denote by $|s|$ the {\it length} of $s$, i.e., the integer $m$ such that $s\in \{0,1\}^m$. Let $s_1,s_2\in \cup_{n=0}^\infty\{0,1\}^n$ be two strings over $\{0,1\}$ of length $m_1$ and $m_2$, respectively. We say that $s_1$ is a {\it prefix} of $s_2$ if $m_1\le m_2$ and $s_{1,i}=s_{2,i}$ for all $i=1,\ldots,m_1$. We denote by $s_1\cdot s_2$ the {\it concatenation} of the two strings.

The following lemma is proved in~\cite{Bshouty18}.
\begin{lemma}\label{string} Let $S=\{s_1,\ldots ,s_N\}$ be a set of $N$ distinct strings such that no string is a prefix of another. Then
$$\max_{s\in S}|s|\ge E(S):=\E_{s\in S}[|s|]\ge \log N.$$
\end{lemma}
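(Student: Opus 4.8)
The plan is to recognize $S$ as a prefix-free code and invoke (a version of) Kraft's inequality, then convert the resulting inequality on lengths into the stated bound on the expected length. First I would observe that the first inequality, $\max_{s\in S}|s|\ge \E_{s\in S}[|s|]$, is immediate since the maximum of a finite set of reals is always at least their (uniform) average; the content of the lemma is entirely in the second inequality $E(S)\ge \log N$.

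For that, I would associate to each string $s_i\in S$ the dyadic interval $J_i := [\,0.s_i,\ 0.s_i + 2^{-|s_i|}\,) \subseteq [0,1)$, i.e.\ the set of reals in $[0,1)$ whose binary expansion begins with $s_i$. The prefix-freeness hypothesis translates exactly into the statement that the intervals $J_1,\dots,J_N$ are pairwise disjoint: if $J_i\cap J_j\neq\emptyset$ then one of $s_i,s_j$ is a prefix of the other. Since these disjoint subintervals of $[0,1)$ have total length at most $1$, and $|J_i| = 2^{-|s_i|}$, we get the Kraft inequality
\[
\sum_{i=1}^{N} 2^{-|s_i|} \le 1 .
\]

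The remaining step is to deduce $\frac{1}{N}\sum_{i=1}^N |s_i| \ge \log N$ from $\sum_{i=1}^N 2^{-|s_i|}\le 1$. This is a convexity argument: by Jensen's inequality applied to the convex function $t\mapsto 2^{-t}$ and the uniform distribution on $\{|s_1|,\dots,|s_N|\}$,
\[
2^{-E(S)} = 2^{-\frac{1}{N}\sum_i |s_i|} \le \frac{1}{N}\sum_{i=1}^N 2^{-|s_i|} \le \frac{1}{N},
\]
and taking $-\log$ of both sides gives $E(S)\ge \log N$, as desired. (All logarithms are base $2$.) I do not expect any real obstacle here, since the lemma is a standard source-coding fact; the only point requiring a little care is the translation of ``no string is a prefix of another'' into disjointness of the dyadic intervals, and handling the empty string $\lambda$ correctly — but $\lambda$ can belong to $S$ only if $N=1$, in which case the bound $E(S)\ge \log 1 = 0$ is trivial, so the interval argument applies without incident to all nontrivial cases.
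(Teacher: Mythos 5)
Your proof is correct and complete: the max-versus-average inequality is indeed trivial, the dyadic-interval argument gives Kraft's inequality $\sum_i 2^{-|s_i|}\le 1$ for a prefix-free set, and Jensen's inequality applied to the convex map $t\mapsto 2^{-t}$ then yields $E(S)\ge\log N$, with the empty-string case handled properly. Note that this paper does not prove Lemma~\ref{string} itself but imports it from the cited reference \cite{Bshouty18}; your Kraft-plus-Jensen derivation is the standard source-coding argument for exactly this fact, so it matches the intended proof rather than offering a genuinely different route.
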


\begin{lemma}\label{Trivial} Let $A$ be a deterministic adaptive algorithm that asks queries. If
$A(\OO_I)\not= A(\OO_J)$ then there is $Q_0\in {\cal Q}(A,\OO_I)\cap {\cal Q}(A,\OO_J)$ such that $Q_0(I)\not=Q_0(J)$.

If, in addition, $I\subseteq J$ then $Q_0(I)=0$ and $Q_0(J)=1$.
\end{lemma}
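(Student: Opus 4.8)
The plan is to exploit the fact that an adaptive deterministic algorithm's behavior is completely determined, at each step, by the sequence of answers it has received so far. Concretely, I would argue by induction (or, equivalently, by following the computation tree) that if $A$ asks the same queries in the same order on two oracles up to some point, then it must continue to ask the same next query on both, as long as all answers seen so far agree. First I would set up the notation: let $Q_1$ be the first query $A$ asks; it is the same regardless of the oracle, since the algorithm has received no input yet. Inductively, the $j$-th query $A$ asks is a fixed function of the answer bits $Q_1(\cdot),\ldots,Q_{j-1}(\cdot)$ to the previous queries. So run $A$ against $\OO_I$ and against $\OO_J$ in parallel and let $t$ be the first index at which the queries differ or at which one run has halted — equivalently, the first index $t$ such that $Q_t(I)\neq Q_t(J)$, where $Q_1,\ldots,Q_{t-1}$ are the (common) first $t-1$ queries.

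The key step is to observe that such a $t$ must exist: if $Q_i(I)=Q_i(J)$ for every query $Q_i$ asked in the (then identical) two runs, then the two runs are the same run — same queries, same answers, same halting point — and hence $A(\OO_I)=A(\OO_J)$, contradicting the hypothesis. Having fixed this first point of disagreement $t$, set $Q_0:=Q_t$. By construction $Q_0\in{\cal Q}(A,\OO_I)\cap{\cal Q}(A,\OO_J)$ (it is asked in both runs, since the runs agree through step $t$) and $Q_0(I)\neq Q_0(J)$, which is the first claim.

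For the second claim, suppose additionally $I\subseteq J$. Then for \emph{any} set $Q$ we have $Q(I)\le Q(J)$: if $Q\cap I\neq\emptyset$ then certainly $Q\cap J\neq\emptyset$. Applying this to $Q=Q_0$ together with $Q_0(I)\neq Q_0(J)$ forces $Q_0(I)=0$ and $Q_0(J)=1$, as required.

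I expect the only real subtlety to be stating precisely why the two parallel runs coincide up to the first disagreement — i.e., pinning down that an adaptive deterministic algorithm's next query depends only on the transcript of previous answers, not on the oracle directly. This is essentially the definition of adaptivity, so once it is spelled out the rest is immediate; there are no calculations to grind through.
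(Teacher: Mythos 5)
Your proof is correct and follows essentially the same route as the paper: run $A$ against $\OO_I$ and $\OO_J$ in parallel, note that a deterministic adaptive algorithm asks identical queries as long as the answers agree, and conclude from $A(\OO_I)\neq A(\OO_J)$ that a first disagreeing query $Q_0$ common to both runs must exist. The paper's proof is terser (it does not even spell out the monotonicity observation $Q(I)\le Q(J)$ for $I\subseteq J$ that yields the second claim, which you state explicitly), but the underlying argument is the same.
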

\begin{proof} Since algorithm $A$ is deterministic, in the execution of $A$ with $\OO_I$ and $\OO_J$, $A$ asks the same queries as long as it gets the same answers to the queries. Since $A(\OO_I)\not= A(\OO_J)$, there must be a query $Q_0$ that is asked to both $\OO_I$ and $\OO_J$ for which $Q_0(I)\not=Q_0(J)$.\qed
\end{proof}

\begin{lemma}\label{mainlemma} Let $A$ be a deterministic adaptive algorithm that asks queries. Let $C\subseteq 2^{[n]}=\{I|I\subseteq [n]\}$.
If for every two distinct $I_1$ and $I_2$ in $C$ there is a query $Q\in {\cal Q}(A,\OO_{I_1})$ such that $Q(I_1)\not= Q(I_2)$ then
$$\max_{I\in C}|{\cal Q}(A,\OO_{I})|\ge \E_{I\in C}|{\cal Q}(A,\OO_{I})|\ge \log |C|.$$
That is, the average-case query complexity and the query complexity of $A$ is at least $\log |C|$.
\end{lemma}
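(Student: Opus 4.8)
The plan is to reduce Lemma~\ref{mainlemma} to Lemma~\ref{string} by encoding, for each $I\in C$, the execution transcript of $A$ with oracle $\OO_I$ as a binary string whose length equals $|{\cal Q}(A,\OO_I)|$. Concretely, run the deterministic adaptive algorithm $A$ against $\OO_I$; since $A$ is adaptive, the sequence of answers it receives is a string $s_I\in\{0,1\}^{|{\cal Q}(A,\OO_I)|}$ (the $i$-th bit being the answer $Q_i(I)$ to the $i$-th query $A$ asks, where $Q_i$ itself is determined by the previous answers). Let $S=\{s_I : I\in C\}$. Then $\E_{I\in C}|{\cal Q}(A,\OO_I)| = \E_{I\in C}|s_I|$ and $\max_{I\in C}|{\cal Q}(A,\OO_I)| = \max_{I\in C}|s_I|$, so it suffices to check the two hypotheses of Lemma~\ref{string}: that the strings in $S$ are pairwise distinct and that none is a prefix of another. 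Note $|S|=|C|$ will follow once we show the strings are distinct.

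First I would argue the prefix-freeness. Suppose $s_{I_1}$ is a prefix of $s_{I_2}$ with $I_1\neq I_2$. Because $A$ is deterministic and adaptive, the queries it asks depend only on the answers received so far; since $s_{I_1}$ and $s_{I_2}$ agree on the first $|s_{I_1}|$ answers, $A$ asks exactly the same first $|s_{I_1}|$ queries in both executions, and after receiving the answer string $s_{I_1}$ it halts in the execution against $\OO_{I_1}$. But the halting decision is also a function of the transcript only, so $A$ must also halt at that same point in the execution against $\OO_{I_2}$, forcing $|s_{I_2}|=|s_{I_1}|$ and hence $s_{I_1}=s_{I_2}$. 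Thus distinct-yet-prefix is impossible; in particular no string is a proper prefix of another.

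Next, distinctness of the strings: suppose $I_1\neq I_2$ in $C$ but $s_{I_1}=s_{I_2}$. By hypothesis there is a query $Q\in{\cal Q}(A,\OO_{I_1})$ with $Q(I_1)\neq Q(I_2)$; say $Q=Q_j$ is the $j$-th query asked in the execution against $\OO_{I_1}$. Since the transcripts agree ($s_{I_1}=s_{I_2}$), an easy induction on $i\le j$ shows $A$ asks the identical first $j$ queries in both executions — the $i$-th query is a function of the first $i-1$ answer bits, which coincide — so $Q_j$ is also asked against $\OO_{I_2}$, and the $j$-th answer bits satisfy $Q_j(I_1)=s_{I_1,j}=s_{I_2,j}=Q_j(I_2)$, contradicting $Q_j(I_1)\neq Q_j(I_2)$. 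Hence the strings are distinct and $|S|=|C|$.

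Having verified both hypotheses, Lemma~\ref{string} gives $\max_{s\in S}|s|\ge \E_{s\in S}[|s|]\ge \log|S| = \log|C|$, which by the length identities above is exactly the claimed bound on the worst-case and average-case query complexities. The main subtlety — really the only place any care is needed — is the bookkeeping that "queries and the halt decision are functions of the answer transcript alone," which is precisely where determinism and adaptivity are used; this is the same observation underlying Lemma~\ref{Trivial}, so I would phrase it once cleanly and invoke it for both the prefix-freeness and the distinctness arguments. Everything else is a direct application of Lemma~\ref{string}.
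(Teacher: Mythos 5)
Your proof is correct and follows essentially the same route as the paper's: encode each $I\in C$ by its answer transcript, argue via determinism that the transcripts are distinct and prefix-free (using the distinguishing-query hypothesis), and then invoke Lemma~\ref{string}. The only cosmetic difference is that you obtain proper-prefix-freeness from determinism plus the halting rule alone and distinctness separately from the hypothesis, whereas the paper gets both at once by locating the first query answered differently; the two bookkeepings are interchangeable.
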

\begin{proof} For $I\in C$ consider the sequence of the queries that $A$ with the oracle ${\cal O}_{I}$ asks and let $s(I)\in \cup_{n=0}^\infty\{0,1\}^n$ be the sequence of answers.
The query complexity and average-case complexity of $A$ is $s(C):=\max_{I\in C}|s(I)|$ and $\bar s(C):=\E_{I\in C}|s(I)|$ where $|s(I)|$ is the length of $s(I)$.
We now show that for every two distinct $I_1$ and $I_2$ in $C$, $s(I_1)\not=s(I_2)$ and $s(I_1)$ is not a prefix of $s(I_2)$.
This implies that $\{s(I)\ |\ I\in C\}$ contains $|C|$ distinct strings such that no string is a prefix of another. Then by Lemma~\ref{string}, the result follows.

Consider two distinct sets $I_1,I_2\subseteq [n]$. There is a query $Q_0\in {\cal Q}(A,\OO_{I_1})$ such that $Q_0(I_1)\not=Q_0(I_2)$. Consider the execution of algorithm $A$ with both ${\cal O}_{I_1}$ and ${\cal O}_{I_2}$, respectively. Since $A$ is deterministic, as long as the answers of the queries are the same both ($A$ with $\OO_{I_1}$ and $A$ with $\OO_{I_2}$) continue to ask the same query. Then, either we get to the query $Q_0$ in both execution and then $Q_0(I_1)\not=Q_0(I_2)$ or some other query $Q'$ that is asked before $Q_0$ satisfies $Q'(I_1)\not=Q'(I_2)$. In both cases $s(I_1)\not=s(I_2)$ and $s(I_1)$ is not a prefix of $s(I_2)$.\qed
\end{proof}

\section{Lower Bounds}
In this section, we prove some lower bounds for the number of queries that are needed to determine exactly the number of defective items with
a Monte Carlo algorithms

\begin{theorem}\label{TH1}
Let $\delta\ge 1/(2(n-d+1))$. Let $A$ be a randomized Monte Carlo adaptive algorithm that for any
set of defective items $I$ of size $|I|\in\{d,d+1\}$, with probability
at least $1-\delta$, determines exactly
the number of defective items~$|I|$. Algorithm $A$ must ask at least
$$d\log \frac{1}{2d\delta}-1$$ queries.

When $\delta\le 1/(2(n-d+1))$ then $A$ must ask at least $d\log(n/d)-O(d)$ queries which is the query complexity (up to additive term $O(d)$) of finding the defective items (and therefore, in particular, finding $|I|$) with $\delta=0$ error.
\end{theorem}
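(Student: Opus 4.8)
The plan is to prove both statements by Yao's minimax principle together with Lemmas~\ref{Trivial} and~\ref{mainlemma}. The common engine is a deterministic observation obtained from Lemma~\ref{Trivial} applied to a set $I$ of size $d$ and $J=I\cup\{x\}$: if a deterministic algorithm outputs $d$ on $\OO_I$ and $d+1$ on $\OO_{I\cup\{x\}}$, then in its run on $\OO_I$ it asks some query $Q_0$ with $x\in Q_0$ and $Q_0\cap I=\emptyset$ (so $Q_0(I)=0$). Hence, if the algorithm is correct on $I$ and on $I\cup\{x\}$ for every $x$ in a set $X\subseteq U\setminus I$ (where $U$ is the universe in use), then in the run on $\OO_I$ the union of the answer-$0$ queries contains $X$ and is disjoint from $I$; consequently the transcript of that run determines $I$ up to the $|(U\setminus I)\setminus X|$ ``unresolved'' elements, and when $X=U\setminus I$ it recovers $I$ exactly as the complement (inside $U$) of that union.

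For the first bound I would work over a restricted universe $U\subseteq[n]$ of size $\approx 1/(2\delta)$; the hypothesis $\delta\ge 1/(2(n-d+1))$ is precisely what makes $|U|$ at least on the order of $d$ and the construction legal. Let the hard distribution $\mu$ put mass $1/2$ on a uniformly random $d$-subset $I$ of $U$ and mass $1/2$ on $I\cup\{x\}$ for a uniformly random $d$-subset $I$ and random $x\in U\setminus I$. Averaging gives a seed $r_0$ for which the deterministic algorithm $A(r_0)$ is correct on a $(1-\delta)$-fraction of $\mu$, hence on a $(1-2\delta)$-fraction of the $d$-subsets and on a $(1-2\delta)$-fraction of the $(d+1)$-subsets of $U$. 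A counting/Markov argument over the correct pairs $(I,x)$ shows that, with $|U|\approx 1/(2\delta)$, the average number of unresolved elements per $d$-subset is $O(1)$, so a constant fraction of the correct $d$-subsets are resolved exactly and therefore receive pairwise distinct transcripts. Since distinct transcripts yield a distinguishing query (the query at the first point of divergence lies in $\mathcal Q(A(r_0),\OO_{I_1})$ and separates $I_1$ from $I_2$), Lemma~\ref{mainlemma} applies to this subfamily of $\Omega(|U|^d/d^{O(1)})$ sets and gives a lower bound of $d\log(1/(2d\delta))-O(1)$ queries; choosing $|U|$ and tracking the constants carefully is what brings the additive loss down to the stated $-1$.

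For the second bound, the point is that $\delta\le 1/(2(n-d+1))$ makes a plain union bound work: for any fixed $d$-subset $I$ of $[n]$, with probability at least $1-(n-d+1)\delta\ge 1/2$ over the seed, $A$ is simultaneously correct on $I$ and on all $n-d$ sets $I\cup\{x\}$, $x\notin I$; by the engine above, on this event the transcript of the run on $\OO_I$ recovers $I$ exactly. Thus $A$ yields a Monte Carlo algorithm that \emph{finds} the defective set with error at most $1/2$ using no more queries than $A$. Running Yao over the uniform distribution on all $\binom{n}{d}$ $d$-subsets produces a deterministic finder correct on at least $\tfrac12\binom nd$ of them; distinct outputs force distinct transcripts, so Lemma~\ref{mainlemma} gives at least $\log\!\big(\tfrac12\binom nd\big)=d\log(n/d)-O(d)$ queries, which matches the deterministic complexity of finding all defectives.

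The main obstacle is the quantitative step in the second paragraph: converting ``correct on a $(1-2\delta)$-fraction of inputs'' into ``a constant fraction of the $d$-subsets of $U$ are resolved exactly'', and doing the bookkeeping of constants (and the exact choice of $|U|$) tightly enough to land on $d\log(1/(2d\delta))-1$ rather than a looser additive term. The remaining ingredients — the two applications of Yao, the use of Lemma~\ref{Trivial}, and the final invocation of Lemma~\ref{mainlemma} — are essentially routine once the hard distributions are fixed.
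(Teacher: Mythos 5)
Your overall skeleton matches the paper's: fix a good seed by averaging, use Lemma~\ref{Trivial} on the pair $(I, I\cup\{x\})$ to extract a query containing $x$ and disjoint from $I$, and finish with Lemma~\ref{mainlemma}; your treatment of the regime $\delta\le 1/(2(n-d+1))$ is essentially the paper's argument run with universe $[n]$ and is fine. But the quantitative step you yourself flag as the main obstacle in the first bound is a genuine gap, and it is exactly where the paper's key trick lives. With your $1/2$--$1/2$ mixture over $d$-subsets and $(d+1)$-subsets of a universe $U$ with $|U|-d\approx 1/(2\delta)$, the chosen seed is only guaranteed to err on at most a $2\delta$ fraction of each layer, so the \emph{expected} number of unresolved elements per $d$-subset is about $2\delta(|U|-d)\approx 1$. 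From an expectation of order $1$, Markov gives no constant fraction of sets with \emph{zero} unresolved elements (it is consistent that every $d$-subset has exactly one), so the claim ``a constant fraction of the correct $d$-subsets are resolved exactly'' does not follow from your averaging. The two natural repairs inside your framework both miss the stated additive term: shrinking $U$ so that the expected unresolved count is below $1/2$ costs an additive $\Theta(d)$ (you get $d\log\frac{1}{4d\delta}$), while allowing $O(1)$ unresolved elements and dividing by the transcript multiplicity $\binom{d+O(1)}{d}=d^{O(1)}$ costs an additive $\Theta(\log d)$; neither bookkeeping lands on $d\log\frac{1}{2d\delta}-1$.

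The paper avoids this entirely by a different weighting: for each fixed $J\subseteq[m]$ with $m=\lfloor 1/(2\delta)\rfloor+d-1$ (this is where $\delta\ge 1/(2(n-d+1))$ is used, to ensure $m\le n$), it defines $Y_J(r)=X(J,r)+\sum_{i\in[m]\setminus J}X(J\cup\{i\},r)$ and bounds $\E_r[Y_J(r)]\le (m-d+1)\delta\le 1/2$ by a union bound over all $m-d$ extensions \emph{before} averaging over $J$; swapping $\E_r$ and $\E_J$ then yields a single seed $r_0$ for which at least half of the $d$-subsets $J$ satisfy $Y_J(r_0)=0$, i.e.\ are correct on $J$ and on all of their extensions simultaneously. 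That gives $|C|\ge\frac12\binom{m}{d}$ fully resolved sets, pairwise distinguishable by a query in ${\cal Q}(A(r_0),\OO_{J_1})$, and Lemma~\ref{mainlemma} then delivers $d\log\frac{1}{2d\delta}-1$ directly. In short: replace your fixed $1/2$--$1/2$ hard distribution by the per-$J$ aggregated error $Y_J$ (equivalently, a weighting in which each $J$ is tied to all its $(d+1)$-extensions), and the gap closes; as written, your proof establishes only a weaker additive term.
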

\begin{proof} Let $A(\OO_I,r)$ be a randomized Monte Carlo algorithm that for $|I|\in \{d,d+1\}$, determines $|I|$ with probability at least $1-\delta$
where $r$ is the random seed of the algorithm. Let $X(I,r)$ be a random variable that is equal to $1$ if $A(\OO_I,r)\not= |I|$ and $0$ otherwise. Then for any $I\subseteq [n]$,\ $\E_r[X(I,r)]\le \delta$. Let $m=\lfloor 1/(2\delta)\rfloor+d-1\le n$.
Consider any $J\subseteq [m]$, $|J|=d$. For any such $J$ let
$$Y_J(r)=X(J,r)+\sum_{i\in [m]\backslash J} X(J\cup\{i\},r).$$ Then for every $J\subseteq [m]$ of size $d$,
$\E_r\left[Y_J(r)\right] \le (m-d+1)\delta\le \frac{1}{2}.$ Therefore for a random uniform $J\subseteq [m]$ of size $d$ we have
$\E_r[\E_J[Y_J(r)]]=\E_J[\E_r[Y_J(r)]]\le 1/2$. Thus, there is $r_0$ such that
for at least half of the sets $J\subseteq [m]$, of size $d$, $Y_J(r_0)=0$. Let $C$ be the set of all $J\subseteq [m]$, of size $d$, such that $Y_J(r_0)=0$. Then $$|C|\ge \frac{1}{2}{m\choose d}=\frac{1}{2}{\lfloor 1/(2\delta)\rfloor+d-1 \choose d}.$$

Consider the deterministic algorithm $A(r_0)$.
We now claim that for every two distinct $J_1,J_2\in C$, there is a query $Q_0\in {\cal Q}(A(r_0),\OO_{J_1})$ such that $Q_0(J_1)\not=Q_0(J_2)$. If this is true then, by Lemma~\ref{mainlemma}, the query complexity of $A(r_0)$ is at least
$$\log |C|\ge \log \frac{1}{2}{\lfloor 1/(2\delta)\rfloor+d-1 \choose d}\ge d\log \frac{1}{2d\delta}-1.$$

We now prove the claim.
Consider two distinct $J_1,J_2\in C$. There is w.l.o.g $j\in J_2\backslash J_1$. Since $Y_{J_1}(r_0)=0$ we have $X(J_1,r_0)=0$ and $X(J_1\cup \{j\},r_0)=0$ and therefore $A(\OO_{J_1},r_0)=d$ and $A(\OO_{J_1\cup\{j\}},r_0)=d+1$. Thus, by Lemma~\ref{Trivial}, there is a query $Q_0\in {\cal Q}(A(r_0),\OO_{J_1})\cap {\cal Q}(A(r_0),\OO_{J_1\cup \{j\}})$ for which $Q_0(J_1)=0$ and $Q_0(J_1\cup\{j\})=1$. Therefore
$Q_0(\{j\})=1$ and then $Q_0(J_1)=0$ and $Q_0(J_2)=1$.\qed
\end{proof}

In the Appendix, we prove this lower bound for any randomized Monte Carlo algorithm with average-case complexity.

For large enough\footnote{The $\omega(1)$ is with respect to $d$. For example, $n>d^{\log^*d}$.} $n$, $n=d^{\omega(1)}$, the following result gives a better lower bound

\begin{theorem}\label{TH2}
Any randomized Monte Carlo adaptive algorithm that with probability
at least $1-\delta$ determines
the number of defectives must ask at least
$$\left(1-\frac{\log d+\log(1/\delta)+1}{\log n+\log(1/\delta)}\right)d\log \frac{1}{2\delta}$$ queries.

In particular, when $n=d^{\omega(1)}$ then the number of queries is at least
$$(1-o(1)) d\log\frac{1}{2\delta}$$
\end{theorem}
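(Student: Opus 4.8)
The plan is to combine the argument of Theorem~\ref{TH1} with an adversarial restriction of the ground set that trades off the two terms appearing in the bound. The key observation is that in the proof of Theorem~\ref{TH1} the only place the item set $[m]$ entered was through the count $\binom{m}{d}$ of candidate defective sets $J$; we needed $m \le n$ and we needed the expected error count $(m-d+1)\delta$ to be bounded by a constant. If instead we allow $(m-d+1)\delta$ to be bounded by $1/(2\delta)$-ish — i.e. take $m$ as large as $1/(2\delta^2)$-scale — the resulting ``bad set'' argument no longer gives that \emph{half} the $J$'s are error-free for a single seed, but a Markov/averaging step still keeps a $(1-2\delta)$-fraction, which is $\ge 1/2$ when $\delta \le 1/4$; the real gain is that we may instead push $m$ all the way up to $n$ and absorb the loss differently. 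So first I would set $m = n$, and for a uniformly random $J \subseteq [n]$ of size $d$, form $Y_J(r)$ exactly as before; now $\E_r[\E_J[Y_J(r)]] \le (n-d+1)\delta$, which is \emph{not} small. The fix is to only demand that $Y_J(r_0) = 0$ on a set $C$ whose size we can still lower bound: by averaging there is a seed $r_0$ with $\E_J[Y_J(r_0)] \le (n-d+1)\delta$, hence by Markov at least a $1/2$-fraction of the $J$'s have $Y_J(r_0) \le 2(n-d+1)\delta$ — but $Y_J$ is integer-valued, so this only forces $Y_J(r_0) = 0$ when $2(n-d+1)\delta < 1$, which is the regime of Theorem~\ref{TH1}.

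So the genuinely new idea must be different: I would \textbf{iterate / recurse} rather than restrict. Partition the reasoning into ``blocks.'' Write $\ell = \lfloor 1/(2\delta) \rfloor$ and think of building the $d$ defective items one block at a time: Theorem~\ref{TH1}'s core argument shows that \emph{distinguishing} a $d$-set from its $(d{+}1)$-supersets costs $\approx d\log(1/(2d\delta))$. To get the sharper $(1-o(1))d\log(1/(2\delta))$ we want to avoid the $\log d$ loss, which came from the $\binom{\ell+d-1}{d} \ge (\ell/d)^d$ estimate. The cleaner route: choose the candidate family $C$ so that the $d$ defectives lie in $d$ \emph{disjoint} windows of size $\ell$ each, one defective per window. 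Then the number of candidate sets is exactly $\ell^d = (1/(2\delta))^d$ up to rounding — no $1/d$ factor — and the total ground set used is $m = d\ell \le d/(2\delta)$. For the error bookkeeping we now need $m \le n$, i.e. $n \ge d/(2\delta)$; when $n = d^{\omega(1)}$ and $\delta$ is not too tiny this holds, and when it fails we shrink $\ell$ to $\ell = \lfloor \min(1/(2\delta), n/d) \rfloor$, which is precisely how the fraction $\frac{\log d + \log(1/\delta)+1}{\log n + \log(1/\delta)}$ enters. The expected-error sum is $\E_r[Y_J(r)] \le (\text{number of single-item perturbations considered})\cdot\delta$; taking perturbations that move one defective within its own window gives $\le d(\ell-1)\delta + d\delta \le 1/2$ provided $\ell\delta \le 1/(2d)$ — oops, that reintroduces the $1/d$.

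Hence the error budget must be spent more carefully: instead of insisting every $J$ in $C$ be \emph{globally} error-free, I would run the averaging \emph{window by window}. Fix the defectives in windows $2,\dots,d$ arbitrarily; within window $1$, the algorithm (with seed $r_0$) must correctly distinguish ``the $d$ items'' from ``those $d$ items with the window-$1$ element swapped for another window-$1$ element'' — a family of $\ell$ sets, with error sum $\le \ell\delta \le 1/2$, so $\ge \ell/2$ of the window-$1$ choices survive \emph{for this fixing}. Multiplying across the $d$ windows (each averaging done conditionally, absorbing a factor $1/2$ per window) yields $|C| \ge (\ell/2)^d$, and then Lemma~\ref{mainlemma} applied to the deterministic algorithm $A(r_0)$ — using Lemma~\ref{Trivial} on the $I \subseteq J$ version to produce the separating query $Q_0$ with $Q_0(J_1)=0, Q_0(J_2)=1$ exactly as in Theorem~\ref{TH1} — gives query complexity $\ge \log|C| \ge d\log(\ell/2) = d\log(1/(2\delta)) - d\log(\cdot)$, and substituting $\ell = \min(1/(2\delta), n/d)$ and simplifying produces the stated bound, with the $\log d + \log(1/\delta) + 1$ numerator coming from $\log(d/\ell)$-type terms when $n/d < 1/(2\delta)$.

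The step I expect to be the main obstacle is making the per-window conditional averaging rigorous: one must choose a single seed $r_0$ that works \emph{simultaneously} for all $d$ nested averaging steps, and order the quantifiers so that the surviving set $C$ is genuinely a product of per-window survivors (so that any two distinct $J_1, J_2 \in C$ differ in some window $w$, and within that window the $Y$-based survival of $J_1$ guarantees $A(r_0)$ distinguishes $J_1$ from $J_1$-with-that-coordinate-changed, which is the needed $Q_0$). The clean way is: pick $r_0$ minimizing $\E_J[Y_J(r_0)]$ over a suitably defined $Y_J$ that sums all single-window swaps, get $\E_J[Y_J(r_0)] \le d\ell\delta$, but then note $Y_J$ decomposes as a sum of $d$ window-indicators and apply a union-bound/Markov coordinatewise to extract $C$ with $|C| \ge \binom{\ell}{1}^d / 2^{\Theta(d)}$; verifying that $2^{\Theta(d)}$ loss is absorbed into the additive $O(d)$ / the $o(1)$ is the delicate bookkeeping. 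The rest (invoking Lemmas~\ref{Trivial} and~\ref{mainlemma}) is identical in spirit to Theorem~\ref{TH1}.
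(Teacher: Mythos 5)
Your plan hinges on a step that you yourself flag as the ``main obstacle,'' and that step does not go through. To run the Lemma~\ref{Trivial}/Lemma~\ref{mainlemma} machinery with a \emph{single} seed $r_0$ you need a large family $C$ such that for every ordered pair of distinct $J_1,J_2\in C$ the deterministic algorithm $A(r_0)$ is correct on $J_1$ and on $J_1\cup\{j\}$ for some $j\in J_2\setminus J_1$. With windows of size $\ell\approx 1/(2\delta)$, each candidate $J$ has about $d\ell$ relevant single-addition perturbations, so the expected (over $J$, for the best seed) number of erroneous perturbations per candidate is $\Theta(d\ell\delta)=\Theta(d)$, not $o(1)$. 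Hence Markov cannot force $Y_J(r_0)=0$ on a constant fraction of candidates, and your coordinatewise fix fails as well: for each fixed window $w$ one does get $\E_J[Y^{(w)}_J(r_0)]\le \ell\delta\le 1/2$, but the half of the $J$'s that survive for window $w$ depends on the fixing of the other windows, so the intersection over the $d$ windows need not be a product set and can in fact be empty (a union bound over windows only gives failure probability $\le d/2$, which is vacuous). The underlying difficulty is that for a fixed seed the error pattern may be arbitrarily correlated across instances, subject only to the per-instance bound $\delta$; establishing ``for every pair there exists a good differing window'' for one seed is exactly what a single-run averaging argument cannot deliver. Separately, even granting the extraction, your count $|C|\ge(\ell/2)^d$ loses an additive $d$ in $\log|C|$, which cannot be absorbed into the $(1-o(1))$ factor when $\delta$ is a constant, whereas the theorem has no such slack.

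The paper's proof supplies precisely the missing idea through amplification rather than windowing: it fixes one good seed $r_0$, draws $q=\lceil(1+\log n)/\log\frac{1}{2\delta}\rceil$ independent uniform permutations, and notes that for a fixed $(I,i)$ the probability that \emph{all} $q$ permuted instances err is at most $(2\delta)^q\le 1/(2n)$, so a union bound over all $n-d$ single additions shows that for at least half of all $\binom{n}{d}$ sets $I$, every $i\notin I$ has some permutation index $j$ on which $A(r_0)$ is correct for both $\phi_j'(I)$ and $\phi_j'(I)\cup\{\phi_j'(i)\}$. The separating query is then found inside a composite deterministic algorithm $A^*$ that runs $A(r_0)$ once per permutation and pulls each query back through $\phi_j'^{-1}$; $A^*$ has query complexity $qM$, and applying Lemma~\ref{mainlemma} to $A^*$ with $|C|\ge\frac12\binom{n}{d}$ gives $qM\ge d\log(n/d)-1$, from which the factor $\bigl(1-\frac{\log d+\log(1/\delta)+1}{\log n+\log(1/\delta)}\bigr)$ emerges as $\log\binom{n}{d}$ divided by $q$. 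Without some analogue of this repetition-and-union-bound device (and the accompanying accounting that the lower bound is charged to $qM$ rather than to $M$), your window construction cannot certify the pairwise-distinguishing property, so the proposal as written has a genuine gap.
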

\begin{proof} Let $A(r)$ be a randomized Monte Carlo algorithm that determines the number of defective items with probability at least $1-\delta$
where $r$ is the random seed of the algorithm. Let $X'(I,r)$ be a random variable that is equal to $1$ if $A(\OO_I,r)\not= |I|$ and $0$ otherwise. Then for every $I$, $\E_r[X'(I,r)]\le \delta$. For every set $I$  and $i\in [n]\backslash I$ let $X(I,i,r)=X'(I,r)+X'(I\cup\{i\},r)$. Then for every $I\subseteq [n]$ and $i\in [n]\backslash I$,\ $\E_r[X(I,i,r)]\le 2\delta$. For $I$ of size $d$ chosen uniformly at random and $i\in [n]\backslash I$ chosen uniformly at random we have $\E_r\E_{I}\E_{i}[X(I,i,r)]=\E_{I}\E_{i}\E_r[X(I,i,r)]\le 2\delta$ and therefore there is exists a seed $r_0$ such that $\E_{I}\E_i[X(I,i,r_0)]\le 2\delta$. We now choose $q$ permutations $\phi_1,\ldots,\phi_q:[n]\to[n]$ uniformly and independently at random where
$$q=\left\lceil \frac{1+\log n}{\log \frac{1}{2\delta}}\right\rceil.$$
Then for any $I$ and $i\in [n]\backslash I$, $\phi_1(I),\ldots,\phi_q(I)$ are uniform and independent random sets of size $d$ and $\phi_1(i),\ldots,\phi_q(i)$ are uniform and independent random integers where $\phi_j(i)\not\in \phi_j(I)$ for all $j\in[q]$.
Hence
$$\E_{\{\phi_j\}_j}\left[\prod_{j=1}^q X(\phi_j(I),\phi_j(i),r_0)\right]=\prod_{j=1}^q \E_{\phi_j}\left[X(\phi_j(I),\phi_j(i),r_0)\right]\le (2\delta)^q$$ and
\begin{eqnarray*}
\E_{\{\phi_j\}_j}\E_{I}\E_{i}\left[\prod_{j=1}^q X(\phi_j(I),\phi_j(i),r_0)\right]&=&\E_{I}\E_{i}\E_{\{\phi_j\}_j}\left[\prod_{j=1}^q X(\phi_j(I),\phi_j(i),r_0)\right]\\&\le& (2\delta)^q.
\end{eqnarray*}
Therefore
$$\E_{\{\phi_j\}_j}\E_I\left[\sum_{i\in [n]\backslash I}\prod_{j=1}^q X(\phi_j(I),\phi_j(i),r_0)\right]\le (n-d)(2\delta)^q<\frac{1}{2}.$$
Thus there are permutations $\{\phi_j'\}_{j\in [q]}$ such that
$$\E_I\left[\sum_{i\in [n]\backslash I}\prod_{j=1}^q X(\phi_j'(I),\phi_j'(i),r_0)\right]< \frac{1}{2}.$$
Since $X$ takes values in $\{0,1\}$, this implies that for at least half of the sets $I\subseteq [n]$, of size $d$, and all $i\in [n]\backslash I$, there exists $j\in[q]$ such that
$
X(\phi_j'(I),\phi_j'(i),r_0)=0.
$
 Let $C$ be the class of such sets $I$ for which the later statement is true. Then
$$|C|\ge \frac{1}{2}{n\choose d}$$
and
\begin{eqnarray}\label{lll}
(\forall I\in C) (\forall i\in[n]\backslash I)(\exists j\in [q])\ X(\phi_j'(I),\phi_j'(i),r_0)=0.
\end{eqnarray}

Consider the following deterministic algorithm $A^*$: \\
$$ $$
\noindent\fbox{%
    \parbox{\textwidth}{%
{\bf Algorithm} $A^*$\\
For $j=1,\ldots,q$ \\
\hspace*{.5cm} Run $A(r_0)$ with oracle $\OO_I$  \\
\hspace*{.9cm}  If $A(r_0)$ asks $Q$ then ask the query
$\phi_j'^{-1} (Q)$.
}}
$$ $$

First notice that, if algorithm $A(r_0)$ has query complexity $M$, then $A^*$ has query complexity at most $qM$.

Now since $\phi_j'^{-1} (Q)\cap I=\O$ if and only if $Q\cap \phi_j'(I)=\O$, at iteration $j$ the algorithm $A(r_0)$ runs as if the defective items are $\phi_j'(I)$. Therefore, at iteration~$j$, the queries that are asked by $A(r_0)$ is ${\cal Q}(A(r_0),{\cal O}_{\phi_j'(I)})$ and the queries that are asked by $A^*$ is $\phi_j'^{-1}({\cal Q}(A(r_0),{\cal O}_{\phi_j'(I)}))$. Hence
\begin{eqnarray}\label{Eqv}
{\cal Q}(A^*,{\cal O}_I)=\bigcup_{j=1}^q \phi_j'^{-1}({\cal Q}(A(r_0),{\cal O}_{\phi_j'(I)})).
\end{eqnarray}

We now show that
\begin{claim}
For every two distinct sets $I_1,I_2\in C$ there is a query $Q'\in {\cal Q}(A^*,{\cal O}_{I_1})$ that gives different answers for $I_1$ and $I_2$.
\end{claim}
If this Claim is true then, by Lemma~\ref{mainlemma}, the query complexity $qM$ of $A^*$ is at least $\log|C|$ and then, since ${n\choose d}\ge (n/d)^d$,
\begin{eqnarray*}M&\ge& \frac{\log |C|}{q}\ge \frac{\log\frac{1}{2}{n\choose d}}{\frac{1+\log n}{\log(1/(2\delta))}+1}\\
&\ge & \frac{\log n-\log d-1}{\log n+\log\frac{1}{\delta}}\cdot d\log\frac{1}{2\delta}\\
&=&\left(1-\frac{\log d+\log(1/\delta)+1}{\log n+\log(1/\delta)}\right)d\log \frac{1}{2\delta}.
\end{eqnarray*}

We now prove the claim. Let $I_1,I_2\in C$ be two distinct sets of size $d$. Then there is $i_0\in I_2\backslash I_1$.
By (\ref{lll}) there is $j_0\in [q]$ such that for $\phi:=\phi_{j_0}'$,
 $X(\phi(I_1),\phi(i_0),r_0)=0$. Therefore $A(\OO_{\phi(I_1)},r_0)=|\phi(I_1)|=|I_1|$ and    $$A(\OO_{\phi(I_1)\cup \{\phi(i_0)\}},r_0)=|\phi(I_1)\cup\{\phi(i_0)\}|=|I_1|+1.$$
Therefore, by Lemma~\ref{Trivial}, there exists a query $Q_0\in {\cal Q}(A(r_0),\OO_{\phi(I_1)})$ that satisfies $Q_0(\phi(I_1))=0$ and $Q_0(\phi(I_1)\cup \{\phi(i_0)\})=1$.
That is, $Q_0\cap \phi(I_1)=\O$ and $Q_0\cap (\phi(I_1)\cup \{\phi(i_0)\})\not=\O$. This implies that $\phi(i_0)\in Q_0$. Since $\phi(i_0)\in \phi(I_2)$ we get that $Q_0\cap \phi(I_1)=\O$ and
 $Q_0\cap \phi(I_2)\not=\O$. Thus $\phi^{-1}(Q_0)\cap I_1=\O$ and
 $\phi^{-1}(Q_0)\cap I_2\not=\O$. That is, the query $Q':=\phi^{-1}(Q_0)$ satisfies
 $$Q'(I_1)\not=Q'(I_2).$$

 Now since $Q_0\in {\cal Q}(A(r_0),\OO_{\phi(I_1)})$, by (\ref{Eqv}), we have (recall that $\phi:=\phi_{j_0}'$) $$Q'=\phi^{-1}(Q_0)\in \phi^{-1}({\cal Q}(A(r_0),\OO_{\phi(I_1)}))\subseteq {\cal Q}(A^*,{\cal O}_I)$$ and therefore $Q'\in {\cal Q}(A^*,{\cal O}_I)$ . This completes the proof of the claim.\qed
\end{proof}

\section{Upper Bound}
In this section we prove
\begin{theorem}\label{TH5} There is a Monte Carlo adaptive algorithm that asks
$$d\log \frac{d}{\delta} +O\left(d+\log d\log\frac{1}{\delta}\right)=(1+o(1)) d\log \frac{d}{\delta}$$ queries and with probability at least $1-\delta$ finds the number of defective items.
\end{theorem}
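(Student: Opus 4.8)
The plan is to build the algorithm in two phases. First I would obtain a rough but cheap estimate $D$ of $d$ — say a constant-factor overestimate $d \le D \le 2d$ (or $(1+\epsilon)d$ for small constant $\epsilon$) — using a known estimation subroutine. From the discussion of \cite{BshoutyBHHKS18,FalahatgarJOPS16} in the introduction, such an estimate can be produced with roughly $O(\log\log d) + O(\log(1/\delta'))$ tests for failure probability $\delta'$; taking $\delta' = \delta/2$ this contributes only $O(\log d\log(1/\delta))$ to the budget, matching the stated additive term. The point of the first phase is that once we know $d$ up to a constant factor, we can calibrate the second phase to run with success probability $1-\delta/2$ at a cost proportional to $d\log(d/\delta)$ rather than $D\log(D/\delta)$ with a bad hidden constant.

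For the second phase — exact determination of $d$ given the estimate $D$ — the natural approach is a randomized splitting/sampling scheme in the spirit of Cheng's algorithm, but tuned to save the factor of $4$. The idea: randomly $2$-color (or more finely partition) the item set into parts of expected size $\Theta(1)$ relative to $d$, so that each part contains $O(1)$ defectives with good probability, then resolve the number of defectives in each part by binary-search-style group testing. Each defective is ``isolated'' using about $\log(\text{part size})$ tests, and summing over the $\approx d$ parts gives the leading $d\log(d/\delta)$ term; the $\log(d/\delta)$ inside the log comes from needing the random partition to behave well simultaneously across all parts, which costs an extra $\log(1/\delta)$ factor inside each local search, or equivalently from repeating the partition $O(\log(1/\delta))$ times and taking a consistent answer. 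One has to be careful that the test budget is \emph{worst-case} (the Monte Carlo definition in Section~2 requires asking \emph{at most} $g(|I|,\delta)$ queries always, not in expectation), so the algorithm should hard-cap the number of tests using the estimate $D$ and declare failure if the cap is exceeded; the cap is exceeded only in the low-probability bad event, which is folded into $\delta$.

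The key quantitative steps, in order: (i) run the estimator to get $D$ with $d \le D \le cd$ except with probability $\delta/2$; (ii) set the per-defective search depth and the number of partition repetitions as functions of $D$ and $\delta$ so that the total is $d\log(d/\delta) + O(d + \log d\log(1/\delta))$ and the failure probability of phase two is $\le \delta/2$; (iii) run the partition-and-binary-search scheme, aborting on the budget cap; (iv) union-bound the two failure events to get total error $\le\delta$, and collapse the expression to $(1+o(1))d\log(d/\delta)$ since $\log d = o(\log(d/\delta))$ is not automatic but $d + \log d\log(1/\delta) = o(d\log(d/\delta))$ holds as $d\to\infty$. The main obstacle I anticipate is getting the \emph{constant in front of the leading term down to exactly $1$}: a naive partition into pairs and independent binary searches tends to waste tests at the boundaries of parts (a part with two defectives costs more than $2\log$, a part with zero costs one wasted test), so the improvement over Cheng's $4d\log(d/\delta)$ requires a more careful amortized charging — essentially arranging that each defective is pinned down by a search over a domain of size $\Theta(d/\delta)^{1/1}$ with no constant-factor overhead, and that empty parts are detected almost for free in aggregate. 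Making that amortization rigorous, while simultaneously respecting the worst-case query cap, is where the real work lies; the rest is bookkeeping with the estimation subroutine and a union bound.
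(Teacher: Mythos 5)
Your two-phase skeleton (a cheap estimate $D$ of $d$, then an exact count calibrated by $D$) matches the paper's, but the second phase as you describe it has a genuine gap, and you flag it yourself: you never exhibit a mechanism that makes each defective searchable over a domain of size $\Theta(d/\delta)$ rather than $\Theta(n/d)$, and you defer the leading-constant issue to an unspecified ``amortized charging.'' If you partition $[n]$ into $\Theta(d)$ parts and binary-search inside each part, the per-defective cost is about $\log(n/d)$, which depends on $n$ and can vastly exceed $\log(d/\delta)$; your fallback of repeating the partition $O(\log(1/\delta))$ times multiplies the leading term by $\log(1/\delta)$ and destroys the bound. Also, the estimation routine you invoke from the introduction has \emph{expected} query complexity, whereas the theorem's Monte Carlo definition requires a worst-case cap, so that step too needs adjusting (the paper instead uses its own estimator, Lemma~\ref{Eight}, which returns $d\le D\le 8d$ with a worst-case cost of $O(d+\log d\log\frac{1}{\delta})$ queries, exactly the additive term).

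The missing idea, which removes any need for amortization, is a birthday-paradox hashing step: choose a uniformly random $f:[n]\to[N]$ with $N=\lceil D^2/\delta\rceil$ and regard the buckets $f^{-1}(1),\ldots,f^{-1}(N)$ as super-items. All defectives land in distinct buckets except with probability at most $d^2/(2N)\le\delta/2$, so counting defective buckets counts defective items. Now run the known \emph{deterministic} algorithm of Lemma~\ref{Find}, which finds all defectives among $N$ items without knowing $d$ in $d\log(N/d)+O(d)$ queries, simulating each of its tests $Q\subseteq[N]$ by the query $\cup_{i\in Q}f^{-1}(i)$. Since $N$ is independent of $n$, this costs $d\log\frac{D^2/\delta}{d}+O(d)=d\log\frac{d}{\delta}+O(d)$ in the worst case; the leading constant $1$ comes for free from the cited deterministic bound (no boundary-waste analysis, no repetition), and the worst-case query cap you worry about is automatic because that subroutine's bound is itself worst case. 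A union bound over the two $\delta/2$ failure events (bad estimate, bucket collision) completes the argument, giving $d\log\frac{d}{\delta}+O\left(d+\log d\log\frac{1}{\delta}\right)$ total queries.
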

This improves the bound $4d\log(d/\delta)$ achieved in~\cite{Cheng11}. By Theorem~\ref{TH2} this bound is optimal up to the additive term $(1+o(1))d\log d$.

We will use the following
\begin{lemma}\label{Find} \cite{ChengDX14,ChengDZ15,SchlaghoffT05} There is a deterministic algorithm, {\bf Find-Defectives},
that without knowing $d$, asks $d\log(n/d)+O(d)$ queries and finds the defective items.
\end{lemma}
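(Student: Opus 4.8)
The plan is to find the defective items one at a time with a biased binary-splitting subroutine, and to keep the leading constant equal to $1$ by always working with \emph{large} test blocks whose size tracks the current density of defectives. Throughout I would maintain the list of still-unclassified items together with the number $i$ of defectives found so far, with the invariant that every found defective has been removed and every removed non-defective item has been certified good by a negative test.

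The core subroutine takes a block $B$ that is known to contain at least one defective and extracts one defective from it by biased bisection: split $B$ into two halves, test one of them, and recurse into a positive half — crucially, if the tested half is negative we recurse into the other half \emph{without testing it}, since $B$ is known positive. Thus each test strictly halves the active block at the cost of a single query, and one defective is isolated from a block of size $g$ in exactly $\lceil \log g\rceil$ queries, with no query wasted on testing both halves. This feature of \emph{never testing both children of a known-positive node} is what prevents the factor-$2$ loss that plain bisection of the whole ground set would incur.

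To obtain the leading term $d\log(n/d)$ rather than $d\log n$, the block fed to the subroutine must have size about (remaining items)$/(i{+}1)$ when $i$ defectives have been found: a negative test on such a block certifies a full block of $\approx n/(i{+}1)$ good items in one query, while a positive block is bisected to yield the next defective in $\approx \log(n/(i{+}1))$ queries. Summing the bisection costs over the $d$ defectives gives $\sum_{i=1}^{d}\log (n/i)=d\log n-\log(d!)=d\log(n/d)+O(d)$, which equals $\log{n\choose d}+O(d)$, the information-theoretic optimum and the bound forced by Lemma~\ref{mainlemma} applied to all $d$-subsets. A separate charging argument bounds the number of \emph{negative} block-tests by $O(d)$, since each removes a block that is a constant fraction of the items present per remaining defective; a final cleanup, invoked once only $O(d)$ items or tiny blocks remain, tests individually and contributes another $O(d)$ queries.

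The main obstacle is precisely that $d$ is unknown, so the ideal block size $\approx n/d$ cannot be fixed in advance. The remedy is to let the block size \emph{adapt} to the answers seen so far: before each block-test, recompute the block size as a fixed fraction of (remaining items)$/(i{+}1)$. This starts the search with a block equal to essentially the whole ground set — so the very first defective already costs only about $\log n$ — and automatically contracts the blocks as defectives accumulate. The delicate part of the proof is to show that this adaptation costs no more than an \emph{additive} $O(d)$: because the block size stays within a constant factor $c$ of the ideal, each of the $d$ bisections pays at most $\log c=O(1)$ extra, which sums to $O(d)$ and leaves the leading constant untouched. This is in sharp contrast to a naive exponential (doubling) search that starts from a single item and grows the block, which would double the leading term to $2d\log(n/d)$; starting with large blocks and contracting them is exactly what yields the competitive-optimal bound $d\log(n/d)+O(d)$ without knowing $d$.
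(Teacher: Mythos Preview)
The paper does not prove Lemma~\ref{Find}; it is quoted as a black box from \cite{ChengDX14,ChengDZ15,SchlaghoffT05} and used only through its statement. So there is no ``paper's own proof'' to compare against, and your outline is essentially a reconstruction of the competitive generalized-binary-splitting idea underlying those references.

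That said, your outline has a concrete gap. The block-size rule ``a fixed fraction of $(\text{remaining items})/(i+1)$'' does give the bisection cost $\sum_{i=1}^{d}\lceil\log(n/i)\rceil=d\log(n/d)+O(d)$, but it does \emph{not} give the $O(d)$ bound you claim for the negative block-tests. Consider just the termination phase: once all $d$ defectives have been found we have $i=d$, the remaining unclassified set can still have size $m\approx n$, and every subsequent block-test is negative and removes only $m/(d+1)$ items, so $m$ shrinks by the factor $d/(d+1)$ per test. Reaching $m\le c(d+1)$ (your ``tiny blocks / $O(d)$ items'' cleanup trigger) then takes $\Theta(d\log(n/d))$ negative tests, which by itself doubles the leading term --- exactly the factor~$2$ you set out to avoid. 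The same issue arises between defectives when $i+1\ll d-i$: the block is much larger than $m/(d-i)$, so it is \emph{not} ``a constant fraction of the items present per remaining defective,'' and your charging sentence does not go through.

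The algorithms in the cited references avoid this by a block-size rule that also reacts to negative answers (for instance, enlarging the next block after a negative test, or maintaining an estimate of the number of \emph{remaining} defectives and using $m$ divided by that estimate), together with a potential-function or credit analysis showing that each query reduces $\log\binom{m}{d-i}$ by $1-O(1/d)$ on average. Your sketch captures the bisection half of the argument correctly; the missing half is a block-size rule and accounting that keeps the total number of negative top-level tests at $O(d)$.
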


Our algorithm, at the first stage, calls the procedure {\bf Estimate} that, with probability at least $1-\delta/2$ finds an estimate $D$ of the number defective items where $d\le D\le 8d$. This procedure makes $d+\log d\log(1/\delta)$ queries.

At the second stage, it uniformly at random partition the set of items $[n]$ into $t=D^2/\delta$ disjoint sets $B_1,B_2,\ldots,B_t$. We show that with probability at least $1-\delta/2$ each set contains at most one defective item. We call a set that contains a defective item a {\it defective set}. Therefore, with probability at least $1-\delta/2$, the number of defective items is equal to the number of defective sets. We then treat eash set $B_i$ as one item $i$ and call the algorithm {\bf Find-Defectives} in Lemma~\ref{Find} on $t$ items to find the defective sets. To do that, each test $Q\subseteq [t]$ in {\bf Find-Defectives} is simulated by the query $S(Q):=\cup_{i\in Q}B_i$ in our algorithm. Obviously, the set $Q$ contains an index of a defective set if and only if $S(Q)$ contains a defective item. Therefore, the algorithm will return the number of defective sets which, with probability at least $1-\delta/2$, is equal to the number of defective items. By Lemma~\ref{Find}, the number of queries asked in the second stage is
$$d\log \frac{t}{d}=d\log \frac{D^2/\delta}{d}=d\log \frac{d}{\delta}+O(d).$$

We now prove
\begin{lemma}\label{Eight} There is a Monte Carlo adaptive algorithm {\bf Estimate} that asks
$$O\left(d+\log d\log\frac{1}{\delta}\right)$$ queries and returns an integer $D$ that, with probability at least $1-\delta$, $D\ge d$ and, with probability $1$, $D \le 8d$.
\end{lemma}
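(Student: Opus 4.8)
The plan is to estimate $d$ by a doubling/geometric search on a threshold $D = 2^i$, where at each stage we test whether the number of defectives is likely below the current threshold. The natural test for ``is $d \lesssim 2^i$?'' is to randomly subsample each item with probability roughly $1/2^i$ and ask a single group query on the resulting subset: if $d$ defectives survive independently with probability $p$, the query answers $0$ with probability $(1-p)^d \approx e^{-pd}$, so choosing $p \approx c/2^i$ makes the query answer $1$ with constant probability when $d \ge 2^i$ and answer $0$ with constant probability when $d \ll 2^i$. Running this basic subsampling test for $i = 0, 1, 2, \ldots$ and stopping the first time we see ``enough'' $0$-answers at a given level gives an estimate $D$; the factor-$8$ slack in the statement $D \le 8d$ is exactly the room needed to absorb the constant in the exponent and the fact that we only stop after crossing the true value by a bounded multiplicative amount.

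Concretely, I would proceed as follows. \textbf{Step 1.} For each candidate level $i$, perform $k_i$ independent trials of the subsampling test at rate $p_i = \Theta(1/2^i)$, for a total of $O(k_i)$ queries at level $i$. \textbf{Step 2.} Declare the estimate to be $D = 2^i$ (or a small constant times it) at the first level $i$ where the fraction of $1$-answers drops below a fixed constant threshold. \textbf{Step 3 (correctness of the lower bound $D \ge d$).} Show that if $2^i$ is more than a constant factor below $d$, then each trial answers $1$ with probability close to $1$, so by a Chernoff bound the stopping condition fails at level $i$ except with probability exponentially small in $k_i$; union-bounding over the at most $\log d$ relevant levels and choosing $k_i = \Theta(\log(\log d / \delta)) = \Theta(\log\log d + \log(1/\delta))$ per level makes the total failure probability at most $\delta$. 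This is what forces $D \ge d$ with probability $1 - \delta$. \textbf{Step 4 (deterministic upper bound $D \le 8d$).} Here we want a guarantee that holds with probability $1$; the trick is to hard-cap the search: once $2^i$ exceeds $8d$ the subsample is empty or nearly so, but more robustly we can simply argue that the algorithm is \emph{defined} to stop no later than a level where $2^i \le 8d$ is guaranteed by a structural/deterministic check (e.g.\ run {\bf Find-Defectives}-style logic, or note that at rate $p_i$ with $2^i > 8d$ the test, whatever it answers, causes us to halt), so $D \le 8d$ always. \textbf{Step 5 (query count).} Sum $O(k_i)$ over $i$ from $0$ to $\log(8d)$: the number of levels is $O(\log d)$ and each costs $O(\log\log d + \log(1/\delta))$, giving $O(\log d (\log\log d + \log(1/\delta)))$; tightening the per-level trial counts (fewer trials at small $i$ where errors are cheap, or reusing trials across adjacent levels) collapses this to $O(d + \log d \log(1/\delta))$ as claimed — the $O(d)$ term coming from the regime where $2^i$ is within a constant factor of $d$ and we genuinely need $\Theta(d)$-scale work, or alternatively from a final verification sweep.

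The main obstacle I anticipate is reconciling the two asymmetric requirements: $D \ge d$ must hold with probability $1-\delta$ (a Monte Carlo guarantee), while $D \le 8d$ must hold with probability $1$ (a worst-case guarantee). A purely probabilistic subsampling test naturally gives two-sided error, so getting the deterministic upper bound requires either an explicit hard cap on the number of doubling steps combined with an argument that overshooting past $8d$ is impossible given how the stopping rule is phrased, or interleaving a deterministic witness-collecting procedure that can never claim more defectives than exist. Pinning down the exact constant ($8$, versus some other small constant) and making the per-level trial budget add up to $O(d + \log d \log(1/\delta))$ rather than a stray extra $\log\log d$ factor is the delicate part of the bookkeeping; everything else is a routine Chernoff-plus-union-bound argument over $O(\log d)$ levels.
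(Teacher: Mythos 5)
There is a genuine gap, and it is exactly the point you flag at the end but do not resolve: the probability-$1$ guarantee $D\le 8d$. Your basic test at level $i$ (subsample each item with probability $p_i\approx c/2^i$ and ask one query on the subsample) has two-sided randomness: even when $2^i>8d$, each trial still answers $1$ with probability $1-(1-p_i)^d\approx c\,d/2^i$, a positive constant fraction of the time, so with positive probability the stopping rule is not triggered and the search overshoots $8d$. A ``hard cap'' at level $8d$ is impossible because the algorithm does not know $d$, and the alternatives you sketch in Step 4 (``a structural/deterministic check'', ``whatever it answers, causes us to halt'') are not instantiated and do not follow from the subsampling test as described. So as written your construction only gives $D\le 8d$ with high probability, which is strictly weaker than the lemma.

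The paper closes this hole by changing the elementary test: at level $k=2^i$ it \emph{partitions all of $[n]$} into $k$ random disjoint classes $f_m^{-1}(1),\dots,f_m^{-1}(k)$, queries every class ($k$ queries per trial), and counts the number of positive classes. This count is deterministically at most $d$ (each positive class contains a defective), so once $4d<k\le 8d$ the condition ``count$<k/4$'' holds with certainty and the algorithm halts, giving $D\le 8d$ with probability $1$. For $k\le d$ the count falls below $k/4$ only if all $d$ defectives land in some $k/4$ classes, an event of probability at most $\binom{k}{k/4}(1/4)^d\le 2^{-d}$ per trial, and repeating $t=\lceil 2\log(1/\delta)/k\rceil$ times and union-bounding over the $O(\log d)$ levels gives the $1-\delta$ guarantee that $D\ge d$. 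The cost per level is $k\cdot t\le 2\log(1/\delta)+k$, which sums over $k=2,4,\dots,O(d)$ to $O(d+\log d\log(1/\delta))$; note that your per-level budget of $\Theta(\log\log d+\log(1/\delta))$ single-query trials would also fit inside this bound, so the query count is not the issue --- the missing idea is the partition-and-count design whose statistic can never exceed $d$, which is what converts the upper bound on $D$ from a high-probability statement into a worst-case one.
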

\begin{proof} The algorithm is in Figure~\ref{Alg0}. For each $k=2^i$ the algorithm does the following $t=\lceil {2\log (1/\delta)}/{k}\rceil$ times: uniformly at random divides the items into $k$ mutually disjoint sets and then tests each set and counts (in the variable ``count'') the number of sets
that contains at least one defective item. First notice that for $k\le d$,
$$\Pr[\mbox{count}<k/4]\le {k\choose k/4}\left(\frac{1}{4}\right)^d\le {4^{k/4}}\left(\frac{1}{4}\right)^d\le2^{-d}.$$
The probability that the algorithm output $k< d$ is the probability that
the event ``$\mbox{count}<k/4$'' happens $t=\lceil {2\log (1/\delta)}/{k}\rceil$ times for some $k=2^i< d$. This is at most
$$\sum_{i=1}^{\lfloor\log d\rfloor}(2^{-d})^{\lceil {2\log (1/\delta)}/{2^i}\rceil}\le
\sum_{i=1}^{\lfloor\log d\rfloor} \delta^{2d/2^i}\le \delta.$$
Therefore with probability at least $1-\delta$ the output is greater or equal to $d$.
Since ``count'' is always less than or equal to the number of defective items $d$, when $4d< k\le 8d$, we have $count\le d<k/4$ and the algorithm halts.
Therefore the output cannot be greater than $8d$.

The number of queries that the algorithm asks is at most
$$\sum_{i=1}^{\lceil \log 8d\rceil} 2^i \left\lceil \frac{2\log (1/\delta)}{2^i}\right\rceil\le \sum_{i=1}^{\lceil \log 8d\rceil}{2\log (1/\delta)}+2^i=
O\left(d+\log d\log\frac{1}{\delta}\right).\qed$$
\end{proof}

\begin{figure}[h!]
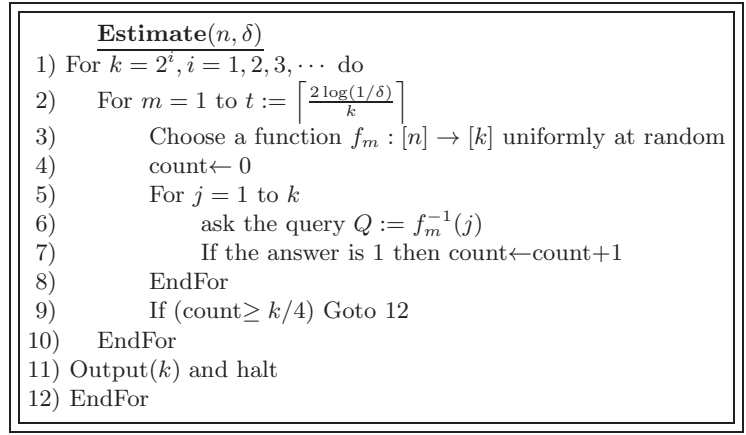

  \begin{center}
   \fbox{\fbox{\begin{minipage}{28em}
  \begin{tabbing}
  Xxxxx\=xxxx\=xxxx\=xxxx\= \kill
  \> \underline{{\bf Estimate$(n,\delta)$}}\\
  \ 1) For $k=2^i,i=1,2,3,\cdots$ do  \\
  \ 2) \> For $m=1$ to $t:=\left\lceil \frac{2\log (1/\delta)}{k}\right\rceil$\\
  \ 3) \>\> Choose a function $f_{m}:[n]\to [k]$ uniformly at random \\
  \ 4) \>\> count$\gets 0$\\
  \ 5) \>\> For $j=1$ to $k$ \\
  \ 6) \>\>\> ask the query $Q:=f_m^{-1}(j)$\\
  \ 7) \>\>\> If the answer is $1$ then count$\gets$count$+1$\\
 \  8) \>\> EndFor\\
  \ 9) \>\> If (count$\ge k/4$) Goto 12\\
   10) \> EndFor\\
   11) Output($k$) and halt\\
   12) EndFor
  \end{tabbing}\end{minipage}}}
  \end{center}
	\caption{An algorithm that estimates $d$.}
	\label{Alg0}
	\end{figure}

We now prove Theorem~\ref{TH5}.

\begin{proof}
The algorithm is in Figure~\ref{Alg1}. First, we run the algorithm in Figure~\ref{Alg0} that estimates $d$ and returns $D$ such that, with probability at least $1-\delta/2$, $d\le D\le 8d$.
Then the algorithm chooses a function $f:[n]\to [N]$ uniformly at random where $N=\lceil D^2/\delta\rceil$.
This is equivalent to uniformly at random divides the items into $N$ mutually disjoint sets $Q_i$, $i=1,\ldots,N$.
The probability that some $Q_i$ contains two defective items is
\begin{eqnarray*}
\Pr[(\exists i)\ Q_i\mbox{\ contains two defective items}]&=&1- \prod_{i=1}^{d-1}\left(1-\frac{i}{N}\right)\\
&\le & \sum_{i=1}^{d-1}\frac{i}{N}\le \frac{d^2}{2N}\le \frac{\delta}{2}.
\end{eqnarray*}
Then the algorithm run {\bf Find-Defectives} in Lemma~\ref{Find} on the $N$ disjoint sets $Q_1,\ldots,Q_N$ to find the number of sets that contains a defective items.
This number is with probability at least $1-\delta/2$ equal to the number of defective items.
Therefore with probability at least $1-\delta/2$ {\bf Find-Defectives} finds $d$.
This completes the proof of correctness.

The query complexity is the query complexity of Estimate$(n,\delta/2)$ and {\bf Find-Defectives} with $N$ items. This, by Lemma~\ref{Find} and~\ref{Eight} is equal to
$$d\log \frac{N}{d}+O(d) +O\left(d+\log d\log\frac{1}{\delta}\right)= d\log \frac{d}{\delta} +O\left(d+\log d\log\frac{1}{\delta}\right).\qed$$
\end{proof}

\begin{figure}[h!]
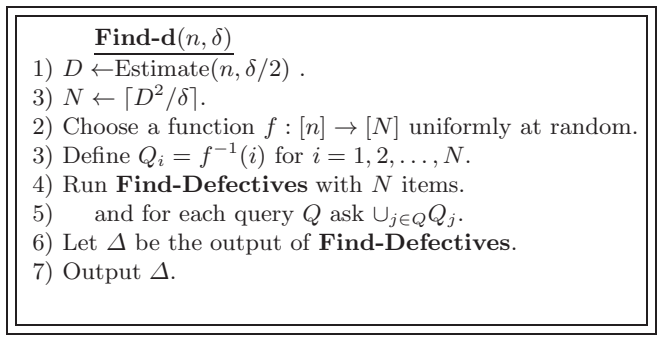

  \begin{center}
   \fbox{\fbox{\begin{minipage}{28em}
  \begin{tabbing}
  Xxxxx\=xxxx\=xxxx\=xxxx\= \kill
  \> \underline{{\bf Find-d$(n,\delta)$}}\\
  \ 1) $D\gets$Estimate$(n,\delta/2)$ . \\
  \ 3) $N\gets \lceil D^2/\delta\rceil$.\\
  \ 2) Choose a function $f:[n]\to [N]$ uniformly at random.\\
  \ 3) Define $Q_i=f^{-1}(i)$ for $i=1,2,\ldots,N$.\\
  \ 4) Run {\bf Find-Defectives} with $N$ items.\\
  \ 5) \> and for each query $Q$ ask $\cup_{j\in Q}Q_j$. \\
  \ 6) Let $\Delta$ be the output of {\bf Find-Defectives}.\\
  \ 7) Output $\Delta$.\\

  \end{tabbing}\end{minipage}}}
  \end{center}
	\caption{An algorithm that finds $d$.}
	\label{Alg1}
	\end{figure}

\bibliography{foo}
\bibliographystyle{plain}

\newpage
\section{Appendix}

\begin{theorem}\label{TH11}
Let $1/d^{\omega(1)}\ge \delta\ge 1/(2(n-d+1))$. Let $A$ be a randomized adaptive algorithm that for any
set of defective items $I$ of size $d$ or $d+1$, with probability
at least $1-\delta$, exactly determines
the number of defective items $|I|$. Algorithm $A$ must ask at least
$$(1-o(1))d\log \frac{1}{\delta}$$ expected number of queries.

When $\delta\le 1/(2(n-d+1))$ then $A$ must ask at least $(1-o(1))d\log n$ queries which is, asymptotically, the query complexity of finding the defective items with $\delta=0$ error.
\end{theorem}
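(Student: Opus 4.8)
The plan is to reproduce the argument behind Theorem~\ref{TH1}, but bounding the \emph{expected} number of queries rather than the worst case --- which is possible because Lemma~\ref{mainlemma} already controls the average-case query complexity of a fixed deterministic algorithm --- and using the extra hypothesis $\delta\le 1/d^{\omega(1)}$ to absorb the $d\log d$ loss of Theorem~\ref{TH1} into the $o(1)$. Concretely, fix a parameter $C=C(d)$ tending slowly to infinity, say $C=\log d$, and set $m=\lfloor 1/(C\delta)\rfloor+d-1$; in the range $1/(2(n-d+1))\le\delta\le 1/d^{\omega(1)}$ one checks $d\le m\le n$. For a seed $r$ and a $d$-subset $J$ of $[m]$ put $X(J,r)=1$ if $A(\OO_J,r)\ne|J|$ and $0$ otherwise, and $Y_J(r)=X(J,r)+\sum_{i\in[m]\setminus J}X(J\cup\{i\},r)$, so that $\E_r[Y_J(r)]\le(m-d+1)\delta\le 1/C$ for every such $J$. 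For each $r$ let $C_r=\{J\subseteq[m]:|J|=d,\ Y_J(r)=0\}$ and let $p_r$ be the probability that a uniformly random $d$-subset $J$ of $[m]$ lies in $C_r$; since $Y_J(r)$ is a nonnegative integer, $1-p_r\le\E_J[Y_J(r)]$, hence $\E_r[1-p_r]\le\E_J\E_r[Y_J(r)]\le 1/C$, i.e.\ $\E_r[p_r]\ge 1-1/C$.

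Next I would run the distinguishing step of Theorem~\ref{TH1} seed by seed: for a fixed $r$ and distinct $J_1,J_2\in C_r$, pick $j\in J_2\setminus J_1$ (nonempty as $|J_1|=|J_2|$); from $Y_{J_1}(r)=0$ we get $A(\OO_{J_1},r)=d$ and $A(\OO_{J_1\cup\{j\}},r)=d+1$, so Lemma~\ref{Trivial} yields $Q_0\in{\cal Q}(A(r),\OO_{J_1})$ with $Q_0(J_1)=0$, $Q_0(J_1\cup\{j\})=1$, whence $j\in Q_0$ and $Q_0(J_2)=1\ne Q_0(J_1)$. Thus $A(r)$ restricted to $C_r$ satisfies the hypothesis of Lemma~\ref{mainlemma}, so $\E_{J\in C_r}|{\cal Q}(A(r),\OO_J)|\ge\log|C_r|$. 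Since query counts are nonnegative, for every $r$ we get $\E_J|{\cal Q}(A(r),\OO_J)|\ge p_r\log|C_r|=p_r\log{m\choose d}+p_r\log p_r$ (with $0\log 0=0$), and averaging over $r$,
\[
\E_J\E_r|{\cal Q}(A,\OO_J,r)|\ \ge\ \E_r[p_r]\cdot\log{m\choose d}\ +\ \E_r[p_r\log p_r]\ \ge\ (1-1/C)\log{m\choose d}-\frac{1}{e\ln 2},
\]
using $\E_r[p_r]\ge 1-1/C$ and the elementary bound $x\log x\ge-1/(e\ln 2)$ on $[0,1]$. A short computation gives $\log{m\choose d}\ge d\log(m/d)\ge d\log\frac1\delta-O(d\log(Cd))$, and since $\delta\le 1/d^{\omega(1)}$ both $\log d$ and $\log C$ are $o(\log\frac1\delta)$; hence $\log{m\choose d}=(1-o(1))d\log\frac1\delta$, $1-1/C=1-o(1)$, and the additive constant is negligible, so $\E_J\E_r|{\cal Q}(A,\OO_J,r)|\ge(1-o(1))d\log\frac1\delta$. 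As the worst $d$-set is at least as bad as the average, $A$ has expected query complexity at least $(1-o(1))d\log(1/\delta)$, which is the first assertion.

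For the second assertion I would apply the first one with $\delta$ replaced by $\delta'=1/(2(n-d+1))\ge\delta$: an algorithm that is correct with probability $\ge1-\delta$ is also correct with probability $\ge1-\delta'$, so the first assertion (with $\delta'$ in place of $\delta$) applies to $A$. In the range where the statement is meaningful ($n-d+1=d^{\omega(1)}$ and, say, $d\le n/2$) one has $\log(1/\delta')=\log\bigl(2(n-d+1)\bigr)=(1-o(1))\log n$, so the bound becomes expected query complexity at least $(1-o(1))d\log n$.

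The delicate point is exactly this average-case bookkeeping. Fixing a single ``good'' seed would force a second Markov inequality --- small failure rate \emph{and} small query count simultaneously --- costing a constant factor; so the argument must be kept entirely inside the outer expectation over $r$, combining nonnegativity of the query count, Lemma~\ref{mainlemma} applied to the seed-dependent class $C_r$, and the estimate $\E_r[p_r\log p_r]\ge-1/(e\ln 2)$, which together cost only an additive $O(1)$. The remaining care is to let the slack parameter $C\to\infty$ slowly enough that $\log(Cd)=o(\log(1/\delta))$ while $m=\lfloor1/(C\delta)\rfloor+d-1$ stays in $[d,n]$, both of which hold throughout the stated range of $\delta$.\qed
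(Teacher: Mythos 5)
Your proof is correct, and its skeleton is the paper's: restrict to $d$-subsets of a truncated ground set $[m]$ with $m\approx 1/(\mathrm{slack}\cdot\delta)$, define $Y_J(r)$, use Lemma~\ref{Trivial} to show that any two sets in the ``good'' class $C_r$ are separated by a query of $A(r)$, and invoke Lemma~\ref{mainlemma} for the average-case bound $\log|C_r|$. The only real divergence is the bookkeeping over the random seed. The paper applies Markov's inequality over $r$ (with parameters $\tau=1/(d\log\frac{1}{d\delta})$, $\eta=1/\log\frac{1}{d\delta}$) to extract a set $R$ of good seeds with $\Pr[R]\ge 1-1/d$ and $|C_r|\ge(1-\eta)\binom{m}{d}$ for $r\in R$, then lower-bounds $\E_I\E_r$ by conditioning on $r\in R$ and $I\in C_r$, losing only the factors $(1-1/d)(1-\eta)=1-o(1)$; your version stays inside the expectation over $r$, using $|C_r|=p_r\binom{m}{d}$, the pointwise bound $\E_J|{\cal Q}(A(r),\OO_J)|\ge p_r\log|C_r|$, and $x\log x\ge -1/(e\ln 2)$, with slack $C=\log d$. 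Both cost only $o(1)$ multiplicative and $O(1)$ additive terms, so your closing remark that fixing good seeds via Markov would ``cost a constant factor'' slightly mischaracterizes the paper's route --- its Markov step is tuned so that the loss is also $1-o(1)$ --- but your alternative is, if anything, a little cleaner and needs one fewer parameter. For the second assertion the paper gives no explicit argument; your substitution $\delta'=1/(2(n-d+1))$ (valid since correctness with probability $1-\delta$ implies it with probability $1-\delta'$), together with the observation that the claim implicitly requires $n-d+1=d^{\omega(1)}$ and $d\le n/2$ so that $\log(1/\delta')=(1-o(1))\log n$, is a reasonable and correct way to discharge it.
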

\begin{proof} Let $A(r)$ be a randomized algorithm that for $I\subseteq [n]$, $|I|\in \{d,d+1\}$ and oracle $\OO_I$, determines $|I|$ with probability at least $1-\delta$
where $r$ is the random seed of the algorithm. Let $X(I,r)$ be a random variable that is equal to $1$ if $A(r,\OO_I)\not= |I|$ and $0$ otherwise. Then for any $I\subseteq [n]$,\ $\E_r[X(I,r)]\le \delta$. Let $m=\lfloor \tau/\delta\rfloor+d-1\le n$ where  $\tau=1/(d\log(1/(d\delta)))$.
Consider any $J\subseteq [m]$, $|J|=d$. For any such $J$ let
$$Y_J(r)=X(J,r)+\sum_{i\in [m]\backslash J} X(J\cup\{i\},r).$$ Then for every $J\subseteq [m]$ of size $d$,
$\E_r\left[Y_J(r)\right] \le (m-d+1)\delta\le \tau.$ Therefore for a random uniform $J\subseteq [m]$ of size $d$ we have
$\E_r[\E_J[Y_J(r)]]=\E_J[\E_r[Y_J(r)]]\le \tau$. Therefore, by Markov's inequality, for $\eta=1/\log(1/(d\delta))$,
$$\Pr_r[\E_J[Y_J(r)]>\eta]\le \frac{\tau}{\eta}=\frac{1}{d}.$$
That is, for random $r$, with probability at least $1-1/d$,
at least $1-\eta$ fraction of the sets $J\subseteq [m]$ of size $d$ satisfies $Y_J(r)=0$. Let $R$ be the set of seeds $r$ such that at least $1-\eta$ fraction of the sets $J\subseteq [m]$ of size $d$ satisfies $Y_J(r)=0$. Then $\Pr_r[R]\ge 1-1/d$. Let $r_0\in R$. Let $C_{r_0}$ be the set of all $J\subseteq [m]$ of size $d$ such that $Y_J(r_0)=0$. Then $$|C_{r_0}|\ge (1-\eta){m\choose d}=(1-\eta){\lfloor \tau/\delta\rfloor+d-1 \choose d}.$$

Consider the deterministic algorithm $A(r_0)$. As in Theorem~\ref{TH1},
for every two distinct $J_1,J_2\in C_{r_0}$, there is a query $Q\in {\cal Q}(A(r_0),\OO_{J_1})$ such that $Q(J_1)\not=Q(J_2)$. Then by Lemma~\ref{mainlemma}, the average-case query complexity of $A(r_0)$ is at least
$$\log |C_{r_0}|\ge \log (1-\eta){\lfloor \tau/\delta\rfloor+d-1 \choose d}\ge d\log \frac{\tau }{d\delta}-\log\frac{1}{1-\eta}.$$

Let $Z(\OO_I,r)=|{\cal Q}(A(r),\OO_I)|$. We have shown that for every $r\in R$, $$\E_{I\in C_{r}}[Z(\OO_I,r)]\ge d\log \frac{\tau}{d\delta}-\log\frac{1}{1-\eta}.$$
Therefore for every $r\in R$,
\begin{eqnarray*}
\E_{I}[Z(\OO_I,r)]&\ge& \E_{I}[Z(\OO_I,r)|I\in C_r]\Pr [I\in C_r]\\
&\ge& (1-\eta)\left( d\log \frac{\tau}{d\delta}-\log\frac{1}{1-\eta}\right).
\end{eqnarray*}
Therefore
\begin{eqnarray*}
\E_{I}\E_r[Z(\OO_I,r)]&=& \E_r\E_I[Z(\OO_I,r)]\\
&\ge &\E_r[\E_I[Z(O_I,r)|r\in R]\Pr[r\in R]]\\
&\ge& \left(1-\frac{1}{d}\right)(1-\eta) \left( d\log \frac{\tau}{d\delta}-\log\frac{1}{1-\eta}\right).
\end{eqnarray*}
Therefore there is $I$ such that
\begin{eqnarray*}
\E_r[Z(\OO_I,r)]\ge \left(1-\frac{1}{d}\right)(1-\eta) \left( d\log \frac{\tau}{d\delta}-\log\frac{1}{1-\eta}\right)
\end{eqnarray*}
and then
$$\E_r[Z(\OO_I,r)]\ge (1-o(1))d\log\frac{1}{\delta}.\qed$$
\end{proof}
\ignore{
New results - We can first estimate and then divide to blocks and find exactly the number by learning. In group testing one can give a simple learning - split to $\alpha d$ blocks - find in each block the defectives one by one. This gives $\alpha d+d\log(n/(\alpha d))$ - then find the best $\alpha$.
}

\end{document}